\newtheorem{thm}{Theorem}
\title{Non-monotone dependence modeling with copulas: an application to the volume-return relationship}
\author{ 
	\href{https://orcid.org/0000-0002-6163-044X}{Manfred Marvin Marchione}\\
	Department of Statistical Sciences\\
	Sapienza University of Rome\\
	\texttt{manfredmarvin.marchione@uniroma1.it} \\
	\And
	\href{https://orcid.org/0000-0002-9926-7869}{Fabio Baione} \\
	Department of Statistical Sciences\\
	Sapienza University of Rome\\
	\texttt{fabio.baione@uniroma1.it}}
\date{\today}
\begin{document}
\maketitle

\begin{abstract}
This paper introduces an innovative method for constructing copula models capable of describing arbitrary non-monotone dependence structures. The proposed method enables the creation of such copulas in parametric form, thus allowing the resulting models to adapt to diverse and intricate real-world data patterns. We apply this novel methodology to analyze the relationship between returns and trading volumes in financial markets, a domain where the existence of non-monotone dependencies is well-documented in the existing literature. Our approach exhibits superior adaptability compared to other models which have previously been proposed in the literature, enabling a deeper understanding of the dependence structure among the considered variables.
\end{abstract}

\keywords{Copulas \and Non-monotone dependence \and Measure-preserving transformations \and Volume-return relationship}

\section{Introduction}

\noindent While the introduction of copulas in the field of statistics dates back to the work by Sklar \cite{sklar}, according to Genest et al. \cite{genest} the scientific production on the topic has seen considerable growth only since 1999. A strong impetus for this growth was particularly given by the books by Joe \cite{joe} and Nelsen \cite{nelsenold}, which have greatly contributed to the spread of copula theory in the scientific community. The potential of copulas in financial and actuarial applications was highlighted by Embrechts \cite{embrechts} and Frees and Valdez \cite{frees}, and today copulas are a commonly used tool in modeling dependencies between random variables.
\noindent Most of the copulas known in the literature, however, are suited to describing phenomena that exhibit a monotonic type of dependence. Surprisingly, the use of copulas for describing non-monotonic dependencies has been rarely treated, and the literature on the subject appears limited. A mathematical analysis of the problem was proposed for the first time by Scarsini and Venetoulias \cite{scarsini}. The authors observed that arbitrary dependence structures can be obtained by applying transformations that preserve the uniform distribution to the marginals of a vector with uniform components. These transformations are known as \textit{measure-preserving} functions. In particular, if two uniformly distributed random variables exhibit maximum positive dependence, the application of a measure-preserving function leads to a bivariate distribution whose support coincides with the graph of the function itself. In the same work, the authors provided various examples of piecewise-linear functions which preserve the uniform distribution. In this paper, we introduce a method for constructing more general measure-preserving functions, thus enabling the construction of copulas which describe arbitrary non-monotone dependence structures. The proposed copula models are presented in a parametric form, which makes them highly adaptable for modeling the complicated dependence structures which emerge from real-world data.\\
The methodology of parameter estimation and model selection is illustrated by applying our results to the analysis of the relationship between returns and trading volumes in financial markets. The existence of a non-monotone dependence between returns and volumes is well-established in the literature. The birth of the scientific literature on the topic is generally attributed to the work of Osborne \cite{osborne}, who modeled the stock price as a diffusion process having variance dependent on the number of transactions. However, as pointed out by Karpoff \cite{karpoff}, Osborne assumed that the transactions are uniformly distributed in time and therefore he expressed the price process in terms of time intervals. Thus, he did not directly examine the relation between volume and returns. It was instead Clark \cite{clark} who modeled the randomness of the velocity of transactions by time-changing the price process with a Lévy stable subordinator. By treating the  trading volume as a measure of the number of transactions, the author concluded that a positive correlation between trading volume and the absolute value of the price variation is to be expected. This conclusion was empirically supported by analyzing data concerning prices and trading volumes of cotton futures. Over the years, a variety of empirical studies have been proposed by different authors. For instance, one such study by Crouch \cite{crouch} showed that the absolute value of price variation was correlated with the trading volume for stocks in the aerospace industry in the United States. Similar concusions were achieved by Tauchen and Pitts \cite{tauchen} in the study of the prices of the T-bills. An interesting category of research in this field comprises studies that have explored the causal relationships between returns and trading volumes, aiming to investigate potential practical applications. Lee and Rui \cite{lee} examined the existence of Granger causality relations between volumes and returns in the New York, Tokyo and London stock markets, while causal relationships in Eastern European markets were studied by G\"und\"uz and Hatemi-J \cite{gunduz}. Unfortunately, the authors have identified causal links only in rare instances, which often involved bidirectional causality. Therefore, although the literature on theoretical models and empirical studies is extensive and diverse, the identification of practical applications remains an area yet to be fully explored.\\
\noindent Credit for studying the non-monotone relationship between volumes and return using copulas for the first time must be attributed to Neto \cite{neto}. The author used the findings of Scarsini and Venetoulias \cite{scarsini} to study the relationship between returns and volume for stocks issued by some French companies. However, the copula model used by Neto is based on the assumption of a fixed dependence structure, which is not defined in a parametric form and hence it is not designed to adapt to the specific data under consideration. The results presented here allow for the construction of flexible copula models that can adapt to any dependence structure found in the data.

\section{Copulas and measure preserving functions}
\noindent This section is devoted to discussing the role of measure-preserving functions in the construction of copulas. We start our work by briefly recalling the definition and some basic properties of copulas. Throughout the paper, we will restrict our investigation to the bivariate case. Therefore, we say that a function $\text{C}:[0,1]^2\to[0,1]$ is a copula if it satisfies the following conditions:
\begin{enumerate}
\item $\text{C}(u,0)=\text{C}(0,v)=0$
\item $\text{C}(u,1)=u$ and $\text{C}(1,v)=v$
\item $\forall\;u_1, u_2, v_1, v_2\in[0,1]$ such that $u_1\le u_2$ and $v_1\le v_2$, it holds that $$\normalfont{\text{V}}_{\normalfont{\text{C}}}([u_1,u_2]\times[v_1,v_2]):=\text{C}(u_2,v_2)-\text{C}(u_2,v_1)-\text{C}(u_1,v_2)+\text{C}(u_1,v_1)\ge0.$$
\end{enumerate}
\noindent From a probabilistic point of view, copulas are cumulative distributions functions of multivariate distributions having standard uniform marginals. It is well-known that every copula $\text{C}$ satisfies, for all $u,v\in[0,1]$, the inequality $$\min(u,v)\le\text{C}(u,v)\le\max(u+v-1,0).$$
\noindent The functions $\text{M}(u,v)=\min(u,v)$ and $\text{W}(u,v)=\max(u+v-1,0)$ are called respectively Fréchet-Hoeffding upper and lower bound. It can be easily shown that the Fréchet-Hoeffding bounds are both copulas in the bivariate case. What is relevant from our point of view is the probabilistic interpretation of the Fréchet-Hoeffding bounds, which is established by the following standard result (see, for example, Fréchet \cite{frechet}).

\begin{thm}\label{montrans}Let $X_1$ and $X_2$ be two random variables such that $X_2=\phi(X_1)$ where $\phi$ is a stricly monotonic function. Denote by $F_{X_1}$ and $F_{X_2}$ the cumulative distribution functions of $X_1$ and $X_2$ respectively.\\
\noindent If $\phi$ is monotonically increasing then the cumulative distribution function of $(X_1,X_2)$ is $$F(x_1,x_2)=\min\left(F_{X_1}(x_1),\;F_{X_2}(x_2)\right).$$
If $\phi$ monotonically decreasing then
$$F(x_1,x_2)=\max\left(F_{X_1}(x_1)+\;F_{X_2}(x_2)-1,\;0\right).$$\end{thm}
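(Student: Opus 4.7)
The plan is to compute $F(x_1,x_2)=P(X_1\le x_1,X_2\le x_2)$ directly by using the functional identity $X_2=\phi(X_1)$ to reduce the bivariate event to a univariate event on $X_1$, and then to translate cumulative probabilities of $X_1$ into cumulative probabilities of $X_2$ via $\phi$. Strict monotonicity guarantees that $\phi$ is invertible on its range, which is what makes the reduction possible.

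First I would handle the increasing case. Strict monotonicity gives $\{\phi(X_1)\le x_2\}=\{X_1\le\phi^{-1}(x_2)\}$, hence
$$F(x_1,x_2)=P\bigl(X_1\le x_1,\;X_1\le\phi^{-1}(x_2)\bigr)=F_{X_1}\bigl(\min(x_1,\phi^{-1}(x_2))\bigr).$$
Since $F_{X_1}$ is non-decreasing, the right-hand side equals $\min\bigl(F_{X_1}(x_1),F_{X_1}(\phi^{-1}(x_2))\bigr)$, and the identity
$$F_{X_1}(\phi^{-1}(x_2))=P(X_1\le\phi^{-1}(x_2))=P(\phi(X_1)\le x_2)=F_{X_2}(x_2)$$
closes the first case.

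For the decreasing case, strict monotonicity yields $\{\phi(X_1)\le x_2\}=\{X_1\ge\phi^{-1}(x_2)\}$, so
$$F(x_1,x_2)=P\bigl(\phi^{-1}(x_2)\le X_1\le x_1\bigr),$$
which vanishes when $x_1<\phi^{-1}(x_2)$ and otherwise equals $F_{X_1}(x_1)-F_{X_1}(\phi^{-1}(x_2)^-)$. Combining these two regimes gives a maximum with zero. To finish, I would compute
$$F_{X_2}(x_2)=P(\phi(X_1)\le x_2)=P(X_1\ge\phi^{-1}(x_2))=1-F_{X_1}(\phi^{-1}(x_2)^-),$$
so that $F_{X_1}(\phi^{-1}(x_2)^-)=1-F_{X_2}(x_2)$ and consequently $F(x_1,x_2)=\max\bigl(0,F_{X_1}(x_1)+F_{X_2}(x_2)-1\bigr)$.

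The only subtle point, and therefore the main obstacle, is the treatment of strict versus non-strict inequalities in the decreasing case, which becomes visible when $F_{X_1}$ has atoms: the event $\{\phi(X_1)\le x_2\}$ corresponds to $\{X_1\ge\phi^{-1}(x_2)\}$ with a weak inequality, producing the left limit $F_{X_1}(\phi^{-1}(x_2)^-)$ rather than $F_{X_1}(\phi^{-1}(x_2))$. Under the standard assumption that $F_{X_1}$ is continuous (which is the case of interest for constructing copulas), the left limit coincides with the value and the distinction disappears; otherwise one has to be careful not to double-count the atom at $\phi^{-1}(x_2)$. The increasing case causes no such trouble because both inequalities align in the same direction.
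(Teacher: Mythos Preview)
The paper does not actually prove Theorem~\ref{montrans}; it is stated as a standard result with a reference to Fr\'echet and then used without proof. So there is no argument in the paper to compare your proposal against.

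Your proof is correct and is the standard one. The reduction of the bivariate event to a univariate event on $X_1$ via the invertibility of $\phi$, followed by the identification $F_{X_1}(\phi^{-1}(x_2))=F_{X_2}(x_2)$ (resp.\ $F_{X_1}(\phi^{-1}(x_2)^-)=1-F_{X_2}(x_2)$), is exactly how this is usually done. Your handling of the atom issue in the decreasing case is accurate: the weak inequality $X_1\ge\phi^{-1}(x_2)$ produces the left limit, and under continuity of $F_{X_1}$ the distinction vanishes. The only additional point worth a sentence in a fully rigorous write-up is the case $x_2\notin\phi(\mathbb{R})$, where $\phi^{-1}(x_2)$ is undefined; there both sides of the claimed identity are easily seen to agree (equal to $0$ or to $F_{X_1}(x_1)$ in the increasing case, and similarly in the decreasing case), so the omission is harmless.
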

\noindent Theorem \ref{montrans} essentially states that the Fréchet-Hoeffding upper and lower bounds describe perfect positive and negative dependence respectively, regardless of the functional form of the dependence and the marginal distributions. It can be easily verified that both the Fréchet-Hoeffding bounds $M$ and $W$ concentrate the probability mass of a bivariate uniform distribution on one of the diagonals of the unit square. This implies that theorem \ref{montrans} can be stated in an equivalent form by asserting that the only possible monotone dependence among two uniformly distributed random variables is the linear dependence. In particular, consider two random variables $U$ and $V$ uniformy distributed over the interval $[0,1]$. Assume that there exists a monotonically increasing function $g:[0,1]\rightarrow[0,1]$ such that $V=g(U)$. By studying the distribution of $V$, it is immediately verified that
\[\mathbb{P}\left(V\le v\right)=\mathbb{P}\left(U\le g^{-1}(v)\right)=g^{-1}(v),\qquad v\in[0,1].\]
\noindent Hence, it is clear that, unless $g$ coincides with the identity function, the assumption that $V$ has a standard uniform distribution is contradicted. Similar conclusions can be drawn in the case of perfect negative dependence. It follows that the only functions which can describe perfect monotone dependence among uniformly distributed random variables are linear. In particular, if \(U\) and \(V\) are two random variables with uniform distribution over \([0,1]\) and \(g: [0,1] \rightarrow [0,1]\) is a strictly monotonic function, the relation \(V=g(U)\) can only hold if \(g(u)=u\) or \(g(u)=1-u\), in which cases the points with coordinates \((U,V)\) will be concentrated on the curve coinciding with the graph of \(g\).\\

\noindent The purpose of this paper is to generalize the ideas described so far to the case where \(g\) is not a monotonic function. From this point forward, we will indicate that a random variable $U$ is uniformly distributed over the interval $[0,1]$ with the notation $U\sim\mathcal{U}_{[0,1]}$. We start our analysis by introducing the notion of \textit{measure-preserving function}.  We say that a function \(g: [0,1] \rightarrow [0,1]\) is measure-preserving if, for a random variable \(U\sim\mathcal{U}_{[0,1]}\), it holds that $g(U)\sim\mathcal{U}_{[0,1]}$. Equivalently, we say that the function $g$ preserves the uniform distribution.
\vspace{2mm}

\noindent We now observe that, if a function \(g:[0,1]\rightarrow[0,1]\) is measure-preserving and $U\sim\mathcal{U}_{[0,1]}$, the pair \((U,g(U))\) follows a distribution with uniform marginals and therefore its cumulative distribution function is a copula. Moreover, the dependence is perfect and it is immediate to verify that the probability mass is concentrated on the curve that coincides with the graph of \(g\). Of course, this is true regardless of whether \(g\) is a monotonic function. Consequently, any function that preserves the uniform distribution corresponds to a curve on which the probability mass of an appropriate copula describing perfect dependence is concentrated. These observations are at the heart of the work by Scarsini and Venetoulias \cite{scarsini}. Following the ideas of the same authors, it is therefore possible to construct copulas that describe dependencies of arbitrary type in the following way.
\vspace{2mm}

\noindent Consider a pair \((U,V)\) of random variables, each uniformly distributed over the interval \([0,1]\) and joined by the copula \(\normalfont\textrm{C}\). Assume that \(g\) is a measure-preserving function. Thus, the pair \((U, g(V))\) also exhibits standard uniform marginal distributions. This occurs because applying the function \(g\) to one of the components of \((U,\;V)\) does not alter the marginal distributions and ensures that the resulting distribution is still a copula. However, transforming one of the two components of the initial vector alters the dependence structure, modifying the copula \(\normalfont\textrm{C}\) into a new copula which we denote by \(\normalfont\textrm{Q}\).  In principle, by choosing an initial copula \(\normalfont\textrm{C}\) and applying an appropriate function \(g\) that preserves the uniform distribution, it is theoretically possible to construct distributions with arbitrary dependence structures. Assume for simplicity that the copula joining the pair \((U, V)\), with $U,V\sim\mathcal{U}_{[0,1]}$, depends on a vector of parameters \(\boldsymbol{\theta}\in\mathbb{R}^d\) for some positive integer $d$, and denote this copula by \(\normalfont\textrm{C}_{\boldsymbol{\theta}}\). Moreover, assume that \(\normalfont\textrm{C}_{\boldsymbol{\theta}}\) encompasses the Fréchet-Hoeffding upper bound, either as a special or a limiting case. Specifically, we suppose that for some \(\boldsymbol{\theta}_0\in\mathbb{R}^d\) it holds that \[\lim_{\boldsymbol{\theta}\to\boldsymbol{\theta}_0}\normalfont\textrm{C}_{\boldsymbol{\theta}}(u,v)=M(u,v)\qquad\forall u,v\in[0,1].\] This means that the closer \(\boldsymbol{\theta}\) is to \(\boldsymbol{\theta}_0\), the more the points with coordinates \((U,V)\) tend to concentrate near the diagonal of the unit square. If we now denote by \(\normalfont\textrm{Q}_{\boldsymbol{\theta}}\) the copula that links \(U\) and \(g(V)\), it is clear that, as \(\boldsymbol{\theta}\) tends to \(\boldsymbol{\theta}_0\), the points with coordinates \((U,g(V))\) tend to concentrate near the graph of \(g\). In other words, the graph of \(g\) represents for \(\normalfont\textrm{Q}_{\boldsymbol{\theta}}\) the curve along which the points tend to concentrate when the dependence becomes stronger. Scarsini and Venetoulias \cite{scarsini} proposed the following measure preserving function as an example:
\begin{equation}g(v)=\begin{dcases}
\dfrac{(1-\zeta+\beta)}{\beta}\;v &\textrm{se }0\le v\le\beta
\\(1-\zeta+\beta)+ \dfrac{(\zeta-\beta)}{(\gamma-\beta)}(v-\beta) \qquad&\textrm{if }\beta< v\le\gamma
\\(1-\zeta+\beta)+ \dfrac{(\zeta-\beta)}{(\delta-\gamma)}(\delta-v) &\textrm{if }\gamma< v\le\delta
\\(1-\zeta+\beta)+ \dfrac{(\zeta-\beta)}{(\epsilon-\delta)}(v-\delta) &\textrm{if }\delta< v\le\epsilon
\\(1-\zeta+\beta)+ \dfrac{(\zeta-\beta)}{(\zeta-\epsilon)}(\zeta-v) &\textrm{if }\epsilon< v\le\zeta
\\\dfrac{(1-\zeta+\beta)}{(1-\zeta)}(1-v) &\textrm{se }\zeta< v\le1\end{dcases}\label{scarsinipiecewiselinear}\end{equation}
\noindent where $0\le\beta\le\gamma\le\delta\le\epsilon\le\zeta\le1$. As pointed out by the same authors, the function (\ref{scarsinipiecewiselinear}) contains some interesting special cases. While it is a piecewise-linear continuous function if the parameters are all different from each other, it becomes discontinuous, for example, in the particular case where $\beta=\gamma=\delta=\epsilon$ and $\zeta=1$. Figures \ref{fig:scarsinicopula} and \ref{fig:scarsinidisccopula} report the results of some Monte Carlo simulations of copulas constructed using the function (\ref{scarsinipiecewiselinear}), both in the case where it is continuous and in the case where there is a discontinuity.\\
\noindent As a further example, consider the function 
\begin{equation}
g(v)=\begin{dcases}
2v\qquad&\textrm{if }0\le v\le\frac{1}{2}\\
2v-1&\textrm{if }\frac{1}{2}< v\le1.
\end{dcases}\label{gausstypefn}
\end{equation}
It is easily verified that the function (\ref{gausstypefn}) preserves the uniform distribution. An exemplification of a copula constructed by the transformation (\ref{gausstypefn}) is given in figure \ref{fig:modsim}.\\

\begin{figure}[h!]
\centering
\includegraphics[scale=0.75]{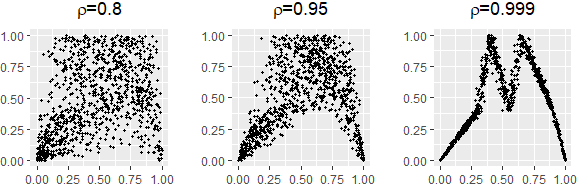}
\caption{Monte Carlo simulations of the pair $(U,g(V))$, where the variable $(U,V)$ has standard uniform marginals and a Gaussian copula with correlation $\rho$ and $g$ is the function (\ref{scarsinipiecewiselinear}) with parameters $\beta=0.3,\;\gamma=0.4,\;\delta=0.55,\;\epsilon=0.65,\;\zeta=0.9$. The simulation was carried out for different values of $\rho$ and for each of the graphs the number of simulated points is 1000.}
\label{fig:scarsinicopula}
\end{figure}

\begin{figure}[h!]
\centering
\includegraphics[scale=0.75]{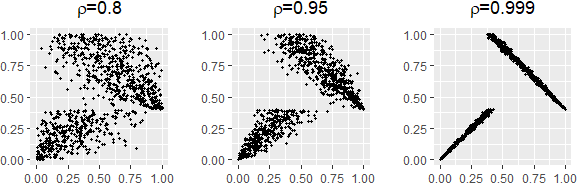}
\caption{Monte Carlo simulations of the pair $(U,g(V))$, where the variable $(U,V)$ has standard uniform marginals and a Gaussian copula with correlation $\rho$ and $g$ is the function (\ref{scarsinipiecewiselinear}) with parameters $\beta=\gamma=\delta=\epsilon=0.4$ and $\zeta=1$. The simulation was carried out for different values of $\rho$ and for each of the graphs the number of simulated points is 1000.}
\label{fig:scarsinidisccopula}
\end{figure}

\begin{figure}[h!]
\centering
\includegraphics[scale=0.75]{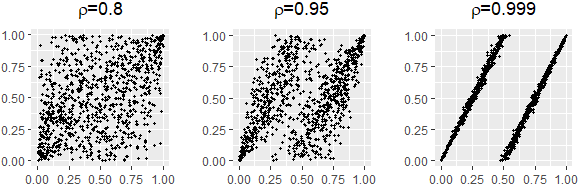}
\caption{Monte Carlo simulations of the pair $(U,g(V))$, where the variable $(U,V)$ has standard uniform marginals and a Gaussian copula with correlation $\rho$ and $g$ is the function (\ref{gausstypefn}). The simulation was carried out for different values of $\rho$ and for each of the graphs the number of simulated points is 1000.}
\label{fig:modsim}
\end{figure}

\section{Non-monotonic dependence copulas}
\noindent As in the cases discussed in the previous section, most of the measure-preserving functions proposed in the literature are piecewise linear. This is primarily due to the fact that constructing measure-preserving functions is not trivial, and the task is significantly simplified by opting for a simpler functional form. The challenge of constructing more general functions which preserve the uniform distribution is relevant from both the theoretical and practical points of view. The development of measure-preserving functions in a sufficiently general form permits to obtain flexible copula models which are able to effectively describe the complicated dependence structures emerging from real-world phenomena. The purpose of constructing general measure-preserving functions is served by the main theorem of our paper, which is stated below (see Marchione \cite{marchione} for details).
 \begin{thm}\label{mainthm}Let $\normalfont{\text{C}}$ be a copula and $f:[0,1]\to[0,1]$ a function such that
\begin{enumerate}
\item $f$ is monotonically increasing with $f(1)=1$
\item $f(v)\ge v,\;\;\forall v\in[0,1]$
\item $\frac{f(v_2)-f(v_1)}{v_2-v_1},\le1\;\;\forall v_1,v_2\in[0,1]$.
\end{enumerate}
Then the function \begin{equation}\label{Q}\normalfont{\text{Q}}(u,v)=\normalfont{\text{C}}(u,f(v))-\normalfont{\text{C}}(u,f(v)-v)\end{equation}is a copula.\end{thm}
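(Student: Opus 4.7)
The plan is to verify directly the three defining properties of a copula listed at the start of Section~2, applied to the function $\normalfont{\text{Q}}$.

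First I would dispose of the boundary and margin conditions by direct substitution. For the grounding condition, plugging $v=0$ gives $\normalfont{\text{Q}}(u,0)=\normalfont{\text{C}}(u,f(0))-\normalfont{\text{C}}(u,f(0))=0$, while $\normalfont{\text{Q}}(0,v)=0$ is immediate from $\normalfont{\text{C}}(0,\cdot)=0$. For the uniform-margin condition I would use hypothesis (1): since $f(1)=1$, one obtains $\normalfont{\text{Q}}(u,1)=\normalfont{\text{C}}(u,1)-\normalfont{\text{C}}(u,0)=u$, and $\normalfont{\text{Q}}(1,v)=f(v)-(f(v)-v)=v$. Before addressing the 2-increasing property, I would also record two consequences of the hypotheses that guarantee the arguments of $\normalfont{\text{C}}$ lie in $[0,1]$: hypothesis (1) together with $f(1)=1$ yields $f(v)\in[0,1]$, and hypothesis (2) gives $f(v)-v\ge 0$.

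The substantive step is the rectangle inequality. My plan is to rewrite the Q-volume in terms of C-volumes as follows. For $u_1\le u_2$ and $v_1\le v_2$,
\begin{align*}
\normalfont{\text{V}}_{\normalfont{\text{Q}}}([u_1,u_2]\times[v_1,v_2])
=\;&\normalfont{\text{V}}_{\normalfont{\text{C}}}\!\left([u_1,u_2]\times[f(v_2)-v_2,\,f(v_2)]\right)\\
&-\normalfont{\text{V}}_{\normalfont{\text{C}}}\!\left([u_1,u_2]\times[f(v_1)-v_1,\,f(v_1)]\right),
\end{align*}
which is a routine regrouping of the four terms in the definition of $\normalfont{\text{V}}_{\normalfont{\text{Q}}}$. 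The task therefore reduces to showing that the $\normalfont{\text{C}}$-measure of the first rectangle dominates that of the second. The key observation is that the two vertical strips are nested: the upper endpoint $f(v)$ is non-decreasing in $v$ by hypothesis (1), while the lower endpoint $f(v)-v$ is non-increasing in $v$ by hypothesis (3), since for $v_1\le v_2$ that condition gives $f(v_2)-f(v_1)\le v_2-v_1$, i.e.\ $f(v_2)-v_2\le f(v_1)-v_1$. Hence $[f(v_1)-v_1,\,f(v_1)]\subseteq[f(v_2)-v_2,\,f(v_2)]$.

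To finish, I would split the larger rectangle into the smaller one plus two disjoint sub-rectangles $[u_1,u_2]\times[f(v_2)-v_2,f(v_1)-v_1]$ and $[u_1,u_2]\times[f(v_1),f(v_2)]$, and observe that the $\normalfont{\text{C}}$-volume is additive on this partition. Since $\normalfont{\text{C}}$ is a copula, each of these two leftover $\normalfont{\text{C}}$-volumes is non-negative by property (3) of $\normalfont{\text{C}}$, which yields $\normalfont{\text{V}}_{\normalfont{\text{Q}}}\ge 0$. The only delicate point is correctly identifying the roles of the three hypotheses on $f$---(1) makes the upper endpoint grow, (3) makes the lower endpoint shrink, and (2) keeps everything inside $[0,1]$---but once the volume is expressed as a difference of $\normalfont{\text{C}}$-volumes over nested strips, the conclusion is immediate from the basic monotonicity of the measure induced by $\normalfont{\text{C}}$.
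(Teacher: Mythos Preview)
Your proposal is correct and follows essentially the same approach as the paper: both express $\normalfont{\text{V}}_{\normalfont{\text{Q}}}$ as the difference of two $\normalfont{\text{C}}$-volumes over vertical strips, use hypotheses (1) and (3) to establish the nesting $[f(v_1)-v_1,f(v_1)]\subseteq[f(v_2)-v_2,f(v_2)]$, and then split the larger rectangle to conclude. Your version is in fact slightly more explicit about why the arguments of $\normalfont{\text{C}}$ remain in $[0,1]$ and about which hypothesis drives each inequality.
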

\begin{proof}It is trivial to show that $\normalfont{\text{Q}}(u,0)=\normalfont{\text{Q}}(0,v)=0$ and $\normalfont{\text{Q}}(1,v)=v$. Moreover, since $f(1)=1$, we have that $\normalfont{\text{Q}}(u,1)=u$. Thus, in order to prove the theorem, it is sufficient to show that $$\normalfont{\text{V}}_{\normalfont{\text{Q}}}([u_1,u_2]\times[v_1,v_2])\ge0$$
for $u_1\le u_2$ and $v_1\le v_2$. We start by observing that
\begin{equation}\label{thm1_1}\normalfont{\text{V}}_{\normalfont{\text{Q}}}([u_1,u_2]\times[v_1,v_2])=\normalfont{\text{V}}_{\normalfont{\text{C}}}([u_1,u_2]\times[f(v_2)-v_2,f(v_2)])-\normalfont{\text{V}}_{\normalfont{\text{C}}}([u_1,u_2]\times[f(v_1)-v_1,f(v_1)]).\end{equation}
Since $f$ is monotonically increasing and $f(v_2)-v_2\le f(v_1)-v_1$ by hypothesis, we have that $$[f(v_1)-v_1,f(v_1)]\subseteq[f(v_2)-v_2,f(v_2)].$$
This permits us to reformulate equation (\ref{thm1_1}) in the form
$$\normalfont{\text{V}}_{\normalfont{\text{Q}}}([u_1,u_2]\times[v_1,v_2])=\normalfont{\text{V}}_{\normalfont{\text{C}}}([u_1,u_2]\times[f(v_2)-v_2,f(v_1)-v_1])+\normalfont{\text{V}}_{\normalfont{\text{C}}}([u_1,u_2]\times[f(v_1),f(v_2)])\ge0$$
which completes the proof.
\end{proof}
Theorem \ref{mainthm} outlines a method for constructing a copula $\normalfont{\text{Q}}$ starting from a known copula $\normalfont{\text{C}}$ and a function $f$ which satisfies suitable properties. The interesting feature of the proposed method is represented by its capability to generate copulas which represent non-monotonic dependencies. Rather than merely offering examples of measure-preserving functions as previous authors have done, our work introduces a comprehensive method for their systematic construction. In order to clarify this point, consider a bivariate standard uniform random variable $(U,V)$ with copula $\normalfont{\text{C}}$. For a given function $f$ satisfying the conditions of theorem \ref{mainthm}, define the copula $$\normalfont{\text{Q}}(u,v)=\normalfont{\text{C}}(u,f(v))-\normalfont{\text{C}}(u,f(v)-v).$$
Denote by $g^+(x)$ and $g^-(x)$ the pseudo-inverses respectively of $f(v)$ and $f(v)-v$, that is
$$g^+(x)=\inf\{v\in[0,1]:\;f(v)\ge x\}$$
and $$g^-(x)=\sup\{v\in[0,1]:\;f(v)-v\ge x\}.$$
By defining the function \begin{equation}\label{g}
g(x)=\begin{cases}g^+(x)\qquad & x\ge f(0)\\g^-(x)\qquad &x< f(0)\end{cases}
\end{equation}
we can write
\begin{align}\normalfont{\text{Q}}(u,v)=&\normalfont{\text{C}}(u,f(v))-\normalfont{\text{C}}(u,f(v)-v)=\mathbb{P}\left(U\le u,\;f(v)-v<V\le f(v)\right)\nonumber\\
=&\mathbb{P}\left(U\le u,\;g(V)\le v\right).\label{algorithm}\end{align}
\noindent Theorem \ref{mainthm} ensures that $\normalfont{\text{Q}}$ is a copula and this implies, in view of (\ref{algorithm}), that $g(V)\sim\mathcal{U}_{[0,1]}$. In other words, the function $g$ is by construction a measure-preserving function. It can be easily verified that in the trivial cases where $f(0)=1$ or $f(0)=0$, $g$ is a linear function. Excluding these two trivial cases, it can be observed that the behavior of $g$ resembles that of a parabola. It takes a zero value at the point $f(0)$, to the right and left of which it exhibits two upward branches that may display jump discontinuities if $f$ is constant on some intervals (see, for example, figure \ref{fig:fgplot}).
\begin{figure}[h]
\centering
\includegraphics[scale=0.85]{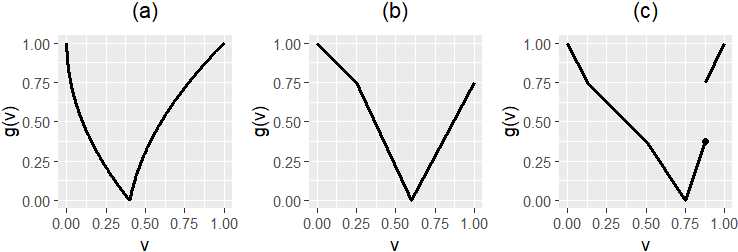}
\caption{Some examples of functions that preserve the uniform distribution obtained through theorem \ref{mainthm} and formula (\ref{g}) for different choices of $f$:\\
(a) $f(v)=\frac{1}{5}(2v^2+v+2)$\\
(b) $f(v)=\min\left(1,\;\frac{8}{15}v+\frac{3}{5}\right)$\\
(c) $f(v)=\min\left(\max\left(\frac{7}{8},\;\frac{1+v}{2}\right),\;\frac{v}{3}+\frac{3}{4}\right)$.}
\label{fig:fgplot}
\end{figure}
\noindent Therefore, $g$ is a non-monotonic function. This implies that, if the copula $\normalfont{\text{C}}$ describes the distribution of two random variables which are stochastically increasing (or decreasing) one with respect to the other, the copula $\normalfont{\text{Q}}$ describes a non-monotonic dependence. The structure of $g$ seems at first glance not to be sufficiently general to capture intricate dependence structures. However, in practice, more complicated dependencies can be described by iteratively applying theorem \ref{mainthm} or, equivalently, by composing measure-preserving functions. Consider, as an example, the functions
$$f_1(v)=1,$$
\vspace{-2mm}

$$ f_2(v)=\max\left(v,\;\frac{\zeta+\gamma-\beta-\epsilon}{\zeta-\beta}v+(\epsilon-\gamma)\right)$$
and
$$f_3(v)=\begin{dcases}
\dfrac{(\epsilon-\delta)}{(\epsilon-\gamma)}\;v +\delta&\textrm{if }0\le v\le\epsilon-\gamma\\
\dfrac{(\zeta-\epsilon)}{(\zeta+\gamma-\epsilon-\beta)}\;v -\frac{(\epsilon-\gamma)(\zeta-\epsilon)}{(\zeta+\gamma-\epsilon-\beta)}+\epsilon\qquad&\textrm{if }\epsilon-\gamma<v\le\zeta-\beta\\
\dfrac{(1-\zeta)}{(1-\zeta+\beta)}\;v +\frac{\beta}{(1-\zeta+\beta)}&\textrm{if }\zeta-\beta<v\le1.
\end{dcases}$$
\noindent which satisfy, for $0\le\beta\le\gamma\le\delta\le\epsilon\le\zeta\le1$, the assumptions of theorem \ref{mainthm}. By defining the corresponding measure-preserving functions as in formula (\ref{g}), that is
\begin{equation*}
g_j(x)=\begin{cases}g_j^+(x)\qquad & x> f_j(0)\\g_j^-(x)\qquad &x\le f_j(0)\end{cases}\qquad j=1,2,3
\end{equation*}
where $g_j^+(x)$ and $g_j^-(x)$ represent the pseudo-inverses respectively of $f_j(v)$ and $f_j(v)-v$, it can be verified that the transformation $$g=g_1\circ g_2\circ g_3$$
\noindent coincides with the function (\ref{scarsinipiecewiselinear}) examined by Scarsini and Venetoulias \cite{scarsini}. Similarly, by setting $$f_1(v)=\max\left(1,\;v\right),\qquad f_2(v)=\frac{1+v}{2}$$
and defining the corresponding measure-preserving transformations $g_j,\;j=1,2$ as in (\ref{g}), it follows that the function $$g=g_1\circ g_2$$ coincides with the measure-preserving function (\ref{gausstypefn}). We can therefore conclude that theorem \ref{mainthm} provides a flexible tool for the construction of copulas that describe dependencies of arbitrary shape. As we will discuss in the remainder of the paper, it is possible to construct such copulas in parametric form simply by constructing parametric functions which preserve the uniform distribution.

\section{Parameter estimation and model selection}\label{methodology}
\noindent Consider a monotonic-type copula $\normalfont{\text{C}}_{\boldsymbol{\theta}}:[0,1]^2\to[0,1]$ depending on a vector of parameters $\boldsymbol{\theta}$ and a function $f_{\boldsymbol{\alpha}}:[0,1]\to[0,1]$ which depends on a vector of parameters $\boldsymbol{\alpha}$. If $f_{\boldsymbol{\alpha}}$ satisfies the assumptions of theorem \ref{mainthm}, we can construct a copula $\normalfont{\text{Q}}_{\boldsymbol{\alpha},\boldsymbol{\theta}}$ as
\begin{equation}\label{Qtoest}\normalfont{\text{Q}}_{\boldsymbol{\alpha},\boldsymbol{\theta}}(u,v)=\normalfont{\text{C}}_{\boldsymbol{\theta}}(u,f_{\boldsymbol{\alpha}}(v))-\normalfont{\text{C}}_{\boldsymbol{\theta}}(u,f_{\boldsymbol{\alpha}}(v)-v),\qquad u,v\in[0,1].\end{equation}
\noindent Assuming that the copula $\normalfont{\text{C}}_{\boldsymbol{\theta}}$ admits the density $\normalfont{\text{c}}_{\boldsymbol{\theta}}$ and that $f_{\boldsymbol{\alpha}}$ is differentiable, we can express the density of $\normalfont{\text{Q}}_{\boldsymbol{\alpha},\boldsymbol{\theta}}$ in the form
\begin{equation}\label{density}\normalfont{\text{q}}_{\boldsymbol{\alpha},\boldsymbol{\theta}}(u,v)=\left[\normalfont{\text{c}}_{\boldsymbol{\theta}}(u,f_{\boldsymbol{\alpha}}(v))-\normalfont{\text{c}}_{\boldsymbol{\theta}}(u,f_{\boldsymbol{\alpha}}(v)-v)\right]\frac{df_{\boldsymbol{\alpha}}(v)}{dv}+\normalfont{\text{c}}_{\boldsymbol{\theta}}(u,f_{\boldsymbol{\alpha}}(v)-v).\end{equation}
\noindent The probability density function (\ref{density}) permits to perform a maximum likelihood estimation of the parameter vectors  $\boldsymbol{\alpha}$ and $\boldsymbol{\theta}$. Consider the sample $(u_1,v_1), (u_2,v_2),..., (u_n,v_n)$ coming from a bivariate distribution of the form $\normalfont{\text{Q}}_{\boldsymbol{\alpha},\boldsymbol{\theta}}$ where the parameters $\boldsymbol{\alpha},\boldsymbol{\theta}$ are unknown. By maximizing the log-likelihood function $$\ell(\boldsymbol{\alpha},\boldsymbol{\theta};\;\mathbf{u},\mathbf{v})=\sum_{k=1}^n\log\normalfont{\text{q}}_{\boldsymbol{\alpha},\boldsymbol{\theta}}(u_k,v_k)$$
with respect to the parameter vectors $\boldsymbol{\alpha},\boldsymbol{\theta}$, the maximum likelihood estimate of the parameters is obtained
$$(\widehat{\boldsymbol{\alpha}},\widehat{\boldsymbol{\theta}})=\arg\max_{\boldsymbol{\alpha},\boldsymbol{\theta}}\;\ell(\boldsymbol{\alpha},\boldsymbol{\theta};\;\mathbf{u},\mathbf{v}).$$
In order to perform model selction, the \textit{Akaike Information Criterion} (AIC) and the \textit{Bayesian Information Criterion} (BIC) can be employed. First introduced by Akaike \cite{aic} and Schwarz \cite{bic}, these indicators are respectively defined as
\begin{equation*}
\text{AIC} = 2k - 2\widehat{\ell}
\end{equation*}
and
\begin{equation*}
\text{BIC} = \log(n)\,k - 2\widehat{\ell}
\end{equation*}
where \(\widehat{\ell} = \ell(\widehat{\boldsymbol{\alpha}}, \widehat{\boldsymbol{\theta}}; \mathbf{u}, \mathbf{v})\) denotes the maximized value of the likelihood function, \(k\) represents the number of model parameters and $n$ is the sample size. In practice, parameter estimation is performed for different copulas \(\normalfont{\text{Q}}_{\boldsymbol{\alpha}, \boldsymbol{\theta}}\) and then the AIC or BIC scores are calculated for each estimated copula. The copula model to be selected is the one with the lowest score. This approach favors models with higher likelihood, indicating better data fit, while penalizing those with more parameters, thus leaning towards a more parsimonious model choice. It can be observed that the BIC criterion imposes a higher penalty on models with a large number of parameters compared to the AIC criterion.

\section{Data and results}\label{sec:datierisultati}
\noindent We analyze the historical time series of daily closing prices and trading volumes for the QQQ fund shares, an investment fund managed by Invesco Ltd. which is designed to track the Nasdaq 100 index. Specifically, we consider \( n+1 \) observations, with \( n=624 \), covering the period from February 1st 2020 to June 24th 2022. Let \( p_k, \; k=0,...,n, \) represent the \( k \)-th observation of the daily closing price. We calculate the daily returns $\mathbf{r}=(r_1,...,r_n)$ as
\begin{equation*}
r_k = \frac{p_k - p_{k-1}}{p_{k-1}}, \qquad k=1,...,n.
\end{equation*}
and we denote by \( \mathbf{w}=(w_1,...,w_n) \) the corresponding trading volumes. In order to investigate the volume-return dependence structure we first apply the probability integral transform to both the variables. This permits to transform the  distributions of $\mathbf{r}$ and $\mathbf{w}$ into uniform distributions, therefore allowing to apply the methodology described in section \ref{methodology}. We use an empirical version of the probability integral transform by computing the pseudo-observations
$$u_k=\frac{1}{n+1}\sum_{j=1}^n\mathbb{1}(r_j\le r_k)=\frac{n}{n+1}\widehat{F}_r(r_k),\qquad k=1,...,n$$
$$v_k=\frac{1}{n+1}\sum_{j=1}^n\mathbb{1}(w_j\le w_k)=\frac{n}{n+1}\widehat{F}_w(w_k),\qquad k=1,...,n$$
\noindent where $\widehat{F}_r(\cdot)$ and $\widehat{F}_w(\cdot)$ are the empirical cumulative distribution functions of the returns and the exchange volumes respectively.\\

\begin{figure}[h]
\centering
\includegraphics[scale=0.9]{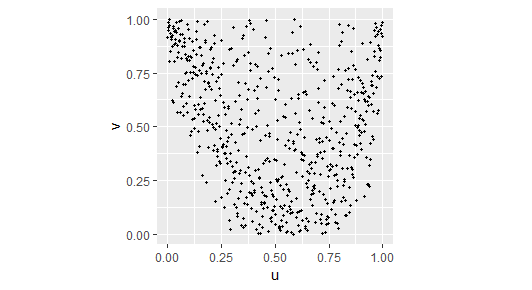}
\caption{Graph of the points corresponding to the pseudo-observations of daily returns (on the $x$-axis) and daily trading volumes (on the $y$-axis) for the QQQ fund. The graph suggests the presence of a non-monotonic type of dependence.}
\label{fig:data}
\end{figure}
\noindent Figure \ref{fig:data} shows the scatterplot of the transformed exchange volumes $\mathbf{v}=(v_1,...,v_n)$ versus the transformed returns $\mathbf{u}=(u_1,...,u_n)$. Unlike the case of monotonic dependencies, where the points tend to be located around the diagonal of the unit square, in this case the points of the scatterplot form a smile which suggests a non-monotonic dependence between returns and exchange volumes. In particular, we observe that exchange volumes tend to be high both when returns are high and low. This conclusion is consistent with the majority of the literature concerning the volume-return relationship. For a detailed review of the literature on the topic we refer to Karpoff \cite{karpoff}. Clearly, the classical monotonic-type copulas are not suitable for modeling such a dependence structure.\\
\noindent In order to describe the non-monotonic dependence emerging from the considered data, we use the results of the previous sections to construct and fit appropriate parametric copulas. We perform parameter estimation for copulas of the form (\ref{Qtoest}) for different choices of $\normalfont{\text{C}}_{\boldsymbol{\theta}}$ and $f_{\boldsymbol{\alpha}}$. In particular, we compare the Frank, Clayton, Gumbel, Gaussian and t copulas. For each of the mentioned copulas we propose three possible choices of the function $f_{\boldsymbol{\alpha}}$, namely
\begin{align}
f_1(&v;c)=\frac{v+c}{1+c},\qquad &&c\ge0\nonumber\\
f_2(&v;a, c)=av^2+(1-a-c)v+c,\qquad && c\ge0,\;0\le a+c\le1,\;0\le1+a-c\le1\nonumber\\
f_3(&v;a,c)=\left(\frac{v+c}{1+c}\right)^a,\qquad && a\ge0,\;c\ge0,\;a\le1+c,\;\left(\frac{c}{1+c}\right)^{a-1}\le\frac{1+c}{a}\nonumber\end{align}
\noindent where the constraints are imposed in such a way that the functions satisfy the assumptions of theorem \ref{mainthm}. Therefore, we fit a total of fifteen different models with a number of parameters ranging from two to four.  We emphasize that the constraints to be imposed for $f_1$, $f_2$ and $f_3$ have different nature. While the constraint for $f_1$ can be easily imposed by a simple reparametrization, the parameters of $f_2$ and $f_3$ are subject to linear and non-linear constraints respectively. This implies that suitable optimization algorithms must be chosen for each case.  The results of the parameter estimation procedure are presented in Table \ref{MLparams_table}, while the corresponding AIC and BIC indices are detailed in Tables \ref{AIC_table} and \ref{BIC_table}, respectively.

\begin{table}[h!]
\centering
\begin{tabular}{| c | c | c | c |}
\cline{2-4}
\multicolumn{1}{c|}{} & $f_1$ & $f_2$ & $f_3$ \\
\hline
Clayton & \makecell{c=1.179097\\$\theta$=1.7217665} &  \makecell{a=0.003634653\\c=0.5430549\\$\theta$=1.7205763}  & \makecell{a=1.0297248\\c=1.2363275\\$\theta$=1.7206042}  \\
\hline
Gumbel & \makecell{c=2.241228\\$\theta$=2.2659783} & \makecell{a=-0.226820627\\c=0.5937735\\$\theta$=2.2885958}   & \makecell{a=0.2779444\\c=0.1603452\\$\theta$=2.2945399}  \\
 \hline  
Frank & \makecell{c=1.460967\\$\theta$=9.0344146} & \makecell{a=-0.159780111\\c=0.5415870\\$\theta$=9.0850997}    & \makecell{a=0.4529352\\c=0.3375023\\$\theta$=9.0684555}  \\
\hline 
Gaussian & \makecell{c=1.815000\\$\theta$=0.7826850} & \makecell{a=-0.186254604\\c=0.5636995\\$\theta$=0.7849017}  & \makecell{a=0.3397894\\c=0.1813854\\$\theta$=0.7848028}  \\
\hline
t & \makecell{c=1.787246\\$\theta$=0.7751602\\$\nu$=5.884596} & \makecell{a=0.204120586\\c=0.4411932\\$\theta$=-0.7756732\\$\nu$=5.257317}  & \makecell{a=0.3376143\\c=0.1838309\\$\theta$=0.7767609\\$\nu$=5.517027}  \\
\hline 
\end{tabular}
\caption{\label{MLparams_table}Results of the maximum likelihood estimation for different combinations of $\normalfont{\text{C}}_{\boldsymbol{\theta}}$ and $f_{\boldsymbol{\alpha}}$. In the case of the t copula, the additional parameter $\nu$ represents the degrees of freedom parameter.}
\vspace{8mm}
\centering
\begin{tabular}{| c | c | c | c |}
\cline{2-4}
\multicolumn{1}{c|}{} & $f_1$ & $f_2$ & $f_3$ \\
\hline
Clayton & -171.1221 & -169.1243  & -169.1243 \\
\hline
Gumbel & -218.6425 & -236.0971  & -233.3283 \\
 \hline  
Frank & -234.7418 & \textbf{-243.0672}  & -241.2312 \\
\hline 
Gaussian & -214.0477 & -220.5011  & -219.6703 \\
\hline
t & -229.4568 & -238.4797  & -236.8192 \\
\hline 
\end{tabular}
\caption{\label{AIC_table}AIC scores for the examined copulas.}
\vspace{8mm}
\centering
\begin{tabular}{| c | c | c | c |}
\cline{2-4}
\multicolumn{1}{c|}{} & $f_1$ & $f_2$ & $f_3$ \\
\hline
Clayton & -162.2498 & -155.8158  & -155.8159 \\
\hline
Gumbel & -209.7702 & -222.7886  & -220.0199 \\
 \hline  
Frank & -225.8695 & \textbf{-229.7588}   & -227.9227 \\
\hline 
Gaussian & -205.1754 & -207.1927  & -206.3618 \\
\hline
t & -216.1483 & -220.7351  & -219.0746 \\
\hline 
\end{tabular}
\caption{\label{BIC_table}BIC scores for the examined copulas.}
\end{table}




\noindent These tables clearly indicate that the quadratic function \( f_2 \), when paired with the Frank copula, emerges as the optimal choice in terms of both AIC and BIC for modeling the dependency structure between returns and volumes in the examined dataset. It's noteworthy that the linear function \( f_1 \) is essentially a special case of \( f_2 \), derived by setting \( a=0 \). This implies that the model selection criterion based on AIC and BIC justifies the incorporation of the additional parameter $a$ to enhance the model's fit to the data. Indeed, the increase in AIC and BIC values due to the penalization for the extra parameter is more than offset by the improvement in likelihood. In Figure \ref{fig:QQQcurve}, we illustrate the measure-preserving function derived from the quadratic function \( f_2 \) which, according to the parameter estimation results, most accurately approximates the dependency between volume and return for the dataset under study.\\

\begin{figure}[h]
\centering
\includegraphics[scale=0.85]{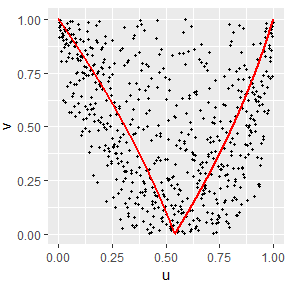}
\caption{Graph of the points corresponding to the pseudo-observations of daily returns and daily trading volumes. The red curve coincides with the measure-preserving function implicit in the estimated model combining the Frank copula and the quadratic function \( f_2 \).}
\label{fig:QQQcurve}
\end{figure}

For the sake of completeness, we now estimate the parameters of monotonic-type copulas to compare and verify that copulas representing non-monotonic dependencies are more suitable for describing the considered data. To this end, we use the R package \texttt{VineCopula}, particularly the \texttt{BiCopSelect} function. This function estimates parameters for various elliptical and Archimedean copulas and their generalizations, including possible rotations of the distributions. In total, thirtyseven models are compared. For a detailed list, we refer to the documentation by Schepsmeier et al. \cite{vinecopula}. The \texttt{BiCopSelect} function outputs the best-fitting model according to user-defined criteria. The best copula among those compared, in terms of both AIC and BIC, is found to be the $90^\circ$ rotated Tawn type 1 copula. The Tawn copula (see Tawn \cite{tawn} and Bernard and Czado \cite{bernard}) is a three-parameter copula, with parameters $\theta, \psi_1, \psi_2$ subject to the constraints $\psi_1, \psi_2 \in [0,1]$ and $\theta \in [1, +\infty)$, defined as
\begin{equation}
\normalfont\textrm{C}_{\theta,\psi_1,\psi_2}(u,v)=(uv)^{A_{\theta,\psi_1,\psi_2}\left(\frac{\log(u)}{\log(uv)}\right)}\label{tawncopula}
\end{equation}
where 
\begin{equation*}
A_{\theta,\psi_1,\psi_2}(t)=(\psi_2-\psi_1)t+(1-\psi_2)+\left([\psi_2(1-t)]^{\theta}+(\psi_1 t)^{\theta}\right)^{\frac{1}{\theta}}.
\end{equation*}
For \( \psi_1 = \psi_2 = 1 \), the Gumbel copula is obtained. The Tawn type 1 copula refers to the bi-parametric copula obtained by setting \( \psi_2 = 1 \) in formula (\ref{tawncopula}). Additionally, after a $90^\circ$ rotation, the resulting copula can be expressed as 
\begin{equation}
\normalfont\textrm{C}_{\theta,\psi_1}^{Rot}(u,v)=v-\normalfont\textrm{C}_{\theta,\psi_1,1}(1-u,v).\label{tawncopularot}
\end{equation}
The parameters estimated by the \texttt{BiCopSelect} function for the copula (\ref{tawncopularot}) are $\theta = 2.5905362$ and $\psi_1 = 0.2815409$, while the AIC and BIC indices are $\textrm{AIC} = -155.1409$ and $\textrm{BIC} = -146.2686$. We reiterate that the selected copula is the best in terms of AIC and BIC among thirty-seven compared monotonic copulas. This copula has higher AIC and BIC values than all the non-monotonic copulas we previously estimated (see tables \ref{AIC_table} and \ref{BIC_table}). This result supports the conclusion of a non-monotonic dependency between volume and return.

\section{Conclusions}
\noindent The results presented in this paper establish a comprehensive method for creating copulas which are capable of representing dependencies of arbitrary form. As highlighted by Scarsini and Venetoulias \cite{scarsini}, the definition of non-monotonic dependence copulas is closely related to the construction of functions that possess the property of preserving the uniform distribution. The here presented results go beyond the existing literature by enabling the construction of parametric functions which satisfy this property. This advancement endows the resultant copula models with an improved flexibility which permits to accurately describe real-world phenomena with intricate dependence structures.\\

\noindent The practicality and adaptability of the proposed methodology are exemplified through its application to the relationship between returns and trading volumes in financial markets, a phenomenon for which the non-monotonic dependence is well-established in the existing literature. Generalizing the work by Neto \cite{neto}, we constructed copulas involving parameters which are able to accurately model the shape of the examined dependence structure. The here adopted model selection criteria validate the inclusion of such parameters, affirming the advantage of utilizing more versatile parametric models for a deeper and more precise description of the data.


\bibliographystyle{plain}
\nocite{*}
\bibliography{bibliography}
\end{document}